\documentclass{llncs}

\title{A Foundation for Ledger Structures}
\author{Chad Nester\thanks{This research was supported by the ESF funded Estonian IT Academy research measure (project 2014-2020.4.05.19-0001).}}
\institute{Tallinn University of Technology, Estonia}

\usepackage{amsmath,amssymb}
\usepackage{mathpartir}
\usepackage{stmaryrd}
\usepackage{graphicx}
\usepackage{enumerate}
\usepackage{dsfont}
\usepackage[export]{adjustbox}
\usepackage[all]{xy}

\usepackage{float}
\floatstyle{boxed}
\restylefloat{figure}

\def \figPath {diagrams}

\newcommand{\X}{\mathbb{X}}

\newcommand{\alice}{\texttt{Alice}}

\newcommand{\bob}{\texttt{Bob}}

\newcommand{\carol}{\texttt{Carol}}

\begin{document}

\maketitle

\begin{abstract}
This paper introduces an approach to constructing ledger structures for cryptocurrency systems with basic category theory. Compositional theories of resource convertibility allow us to express the material history of virtual goods, and ownership is modelled by a free construction. Our notion of ownership admits an intuitive graphical representation through string diagrams for monoidal functors. 
\end{abstract}

\section{Introduction}

Modern cryptocurrency systems consist of two largely orthogonal parts: A consensus protocol, and the ledger structure it is used to maintain. While consensus protocols have received a lot of attention (see e.g. \cite{Nak08,Kia17}), the design space of the accompanying ledger structures is barely explored. The recent interest in smart contracts has led to the development of sophisticated ledger structures with complex behaviour (see e.g. \cite{Atz18,Woo14}). These efforts have been largely \emph{ad hoc}, and the resulting ledger structures are difficult to reason about. This difficulty also manifests in the larger system, which has contributed to several unfortunate incidents involving blockchain technology \cite{Atz17}.

A strong mathematical foundation for ledger structures would enable more rigorous development of sophisticated blockchain systems. Further, the ability to reason about the ledger at a high level of abstraction would facilitate analysis of system behaviour. This is important: users of the system must understand it in order to use it with confidence. The formalism we propose has an intuitive graphical representation, which would make this kind of rigorous operational understanding possible on a far wider scale that it would otherwise be. 

Blockchain systems are largely concerned with recording the material history of virtual objects, with a particular focus on changes in ownership. The resource theoretic interpretation of string diagrams for symmetric monoidal categories gives a precise mathematical meaning to this sort of material history. Building on this, we consider string diagrams augmented with extra information concerning the ownership of resources. We give these diagrams a precise mathematical meaning in terms of strong monoidal functors, drawing heavily on the work of \cite{McC11}, where our augmented diagrams originated. We show that an augmented resource theory has the same categorical structure as the original, in the sense that the two corresponding categories are equivalent. Finally, we give a simple example of a ledger structure using our machinery. 

\section{Monoidal Categories as Resource Theories}

We assume familiarity with some basic category theory, in particular with symmetric monoidal categories. A good reference is \cite{Mac71}. Throughout, we will write composition in diagrammatic order. That is, the composite of $f : X \to Y$ and $g : Y \to Z$ is written $fg : X \to Z$. We may also write $g \circ f : X \to Z$, but we will \emph{never} write $gf : X \to Z$. We will make heavy use of string diagrams for monoidal categories (see e.g. \cite{Sel10}), which we read from top to bottom (for composition) and left to right (for the monoidal tensor). Our string diagrams for ownership are in fact the string diagrams for monoidal functors of \cite{McC11}.

\subsection{Resource Theories}
We begin by observing (after \cite{Coe14}) that a symmetric strict monoidal category can be interpreted as a theory of resource convertibility: Each object corresponds to collection of resources with $A \otimes B$ denoting the collection composed of both $A$ and $B$ and the unit $I$ denoting the empty collection. Morphisms $f : A \to B$ are then understood as a way to convert the resources of $A$ to those of $B$.

For example, consider the free symmetric strict monoidal category on the set

\[\{\texttt{bread},\texttt{dough},\texttt{water},\texttt{flour},\texttt{oven}\}\]

\noindent of atomic objects, subject to the following additional axioms:

\begin{mathpar}
  \texttt{mix} : \texttt{water} \otimes \texttt{flour} \to \texttt{dough}

  \texttt{knead} : \texttt{dough} \to \texttt{dough}

  \texttt{bake} : \texttt{dough} \otimes \texttt{oven} \to \texttt{bread} \otimes \texttt{oven}
\end{mathpar}

This category can be understood as a theory of resource convertibility for baking bread. The morphism $\texttt{mix}$ represents the process of combining water and flour to form a bread dough, $\texttt{knead}$ the process of kneading the dough, and $\texttt{bake}$ the process of baking the dough in an oven to yield bread (and an oven).  While this model has many failings as a theory of bread, it suffices to illustrate the idea. The axioms of a symmetric strict monoidal category provide a natural scaffolding for this theory to live in. For example, consider the morphism

\[ (\texttt{bake} \otimes 1_{\texttt{dough}})(1_{\texttt{bread}} \otimes \sigma_{\texttt{oven},\texttt{dough}} \texttt{bake})\]

\noindent where $\sigma_{A,B} : A \otimes B \stackrel{\sim}{\longrightarrow} B \otimes A$ is the braiding. This morphism has type

\[\texttt{dough} \otimes \texttt{oven} \otimes \texttt{dough} \to \texttt{bread} \otimes \texttt{bread} \otimes \texttt{oven}\]

\noindent and describes the transformation of two pieces of dough into two loaves of bread by baking them one after the other in an oven. We obtain a string diagram for this morphism by drawing our objects as wires, and our morphisms as boxes with inputs and outputs. Composition is represented by connecting output wires to input wires, and we represent the tensor product of two morphisms by placing them beside one another. Finally, the braiding is represented by crossing the involved wires. For the morphism in question, we obtain:

\[ \includegraphics[height=4cm]{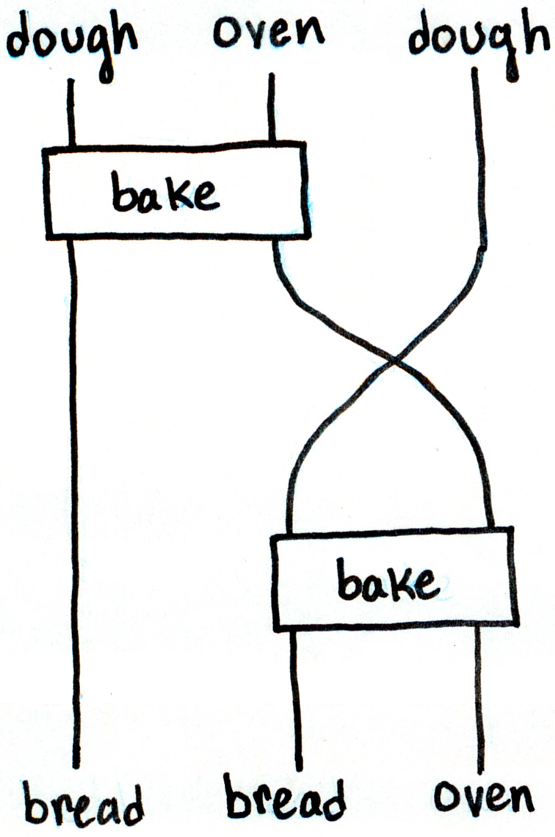} \]

We will think of our ledger systems in terms of such string diagrams: The state of the system is a string diagram describing the \emph{material history} of the resources involved, the available resources correspond to the output wires, and changes are effected by appending resource conversions to the bottom of the diagram. From now on we understand a \emph{resource theory} to be a symmetric strict monoidal category with an implicit resource-theoretic interpretation.

\subsection{How to Read Equality} 

Suppose we have a resource theory $\X$, and two resource transformations $f,g : A \to B$. Each of $f$ and $g$ expresses a different way to transform an instance of resource $A$ into an instance of resource $B$, but these may not have the same effect. For example, consider $\texttt{knead} : \texttt{dough} \to \texttt{dough}$ and $1_{\texttt{dough}} : \texttt{dough} \to \texttt{dough}$ from our resource theory of bread. Clearly these should not have the same effect on the input dough. This is reflected in our resource theory in the sense that they are not made equal by its axioms. For contrast, we can imagine a (somewhat) reasonable model of baking bread in which there is no difference between kneading the dough once and kneading it many times. We could capture this in our resource theory of baking bread by imposing the equation
\[ \texttt{knead} = \texttt{knead} \circ \texttt{knead}\]
In this new resource theory, our equation tells us that kneading dough once has the same effect as kneading it twice, or three times, and so on, since the corresponding morphisms of the resource theory are made equal by its axioms. Of course, the material history described by $\texttt{knead} \circ \texttt{knead}$ is not identical to that described by $\texttt{knead}$. In the former case, the kneading process has been carried out twice in sequence, while in the latter case it has only been carried out once. That these morphisms are equal merely means that the effect of each sequence of events on the dough involved is the same. 

We adopt the following general principle in our design and understanding of resource theories: \textit{Two transformations should be equal precisely when they have the same effect on the resources involved}.

We further illustrate this by observing that, by the axioms of a symmetric monoidal category (specifically, by naturality of braiding), the following two transformations in the resource theory of baking (expressed as string diagrams) are equal. The transformation on the left describes baking two loaves of bread by first mixing and kneading two batches of dough before baking them in sequence, while the transformation on the right describes baking two loaves of bread by mixing, kneading, and baking the first batch of dough, and \emph{then} mixing, kneading, and baking the second batch. Thus, according to our resource theory the two procedures will yield the same result -- not an entirely unreasonable conclusion! 

\[\includegraphics[height=10cm]{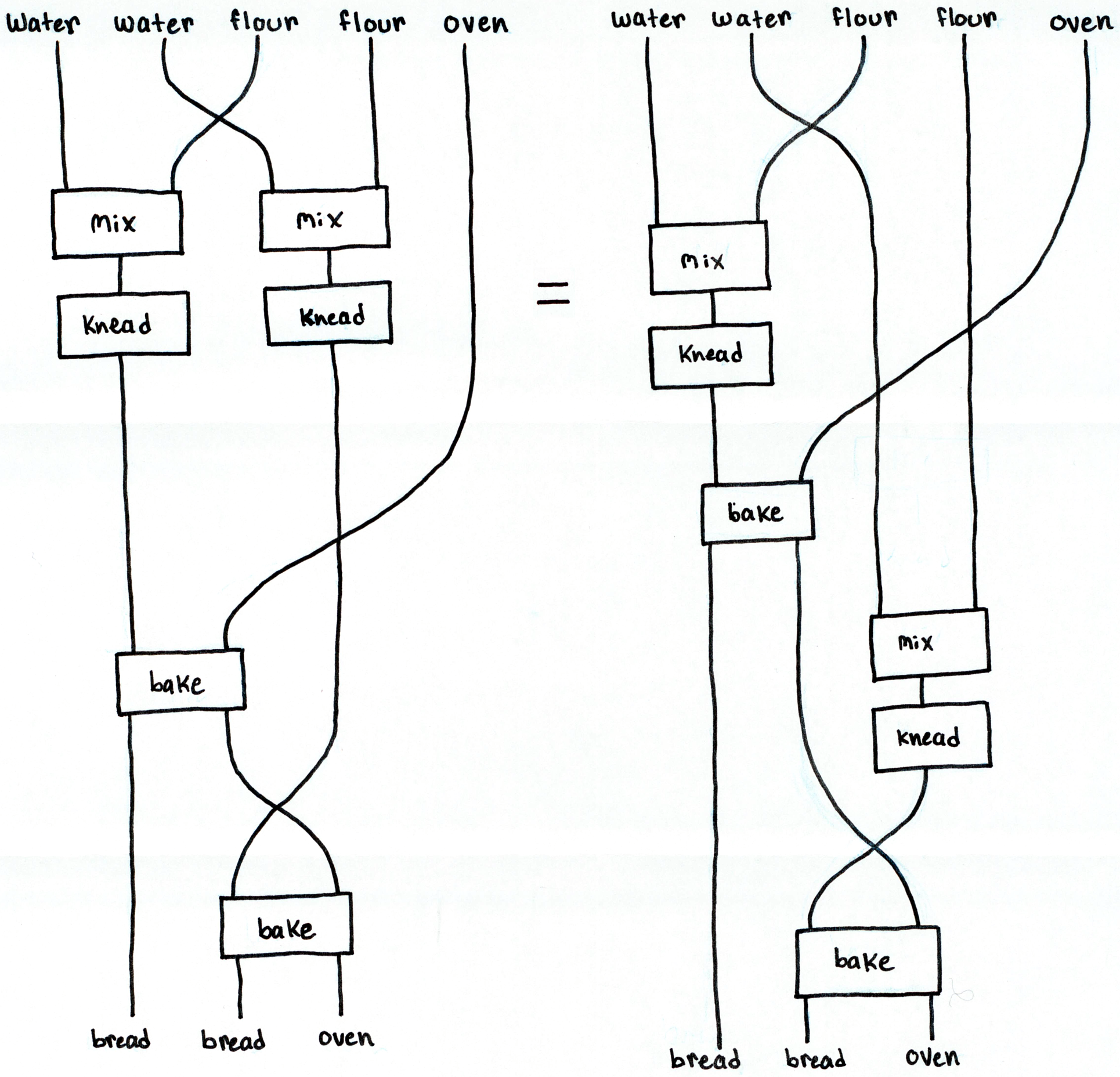}\]

\section{String Diagrams for Ownership}

Ledgers used by blockchain systems are largely concerned with \emph{ownership}. For example, in the Bitcoin system, each coin is associated with a computable function called the \emph{validator}, which is used to control access to it. Anyone who wishes to use the coin must supply input data, called a \emph{redeemer}, and the system only allows them to use the coin in question in case running the validator on the redeemer terminates in a fixed amount of time. If the validator is defined only on the data that results from $\alice$ digitally signing a nonce generated by the system, then that coin can only be used by $\alice$, who then effectively owns it. 

Different use cases call for different authentication schemes. For example, a proposed application of blockchain technology is to improve supply chain accountability by requiring participants to log any transfers and transformations of material on a public ledger (see e.g. \cite{Jab18,Sta17}). Here ownership implies responsibility, and so for Alice to log the transfer of, say, a ton of steel to Bob, \emph{both} Alice and Bob must ratify the transfer via digital signature.

What different use cases have in common is that the resources of the ledger system are associated with ownership data. We leave the interpretation of this ownership data, including the specific details of the authentication scheme unspecified, instead giving a structural account of resource ownership. We develop our account of resource ownership intuitively, and somewhat informally, by introducing addtional features to string diagrams. This is made fully formal in the next section. 

\subsection{Ownership and Collection Management}

Begin by assuming a theory of resources $\X$, and a collection $\mathcal{C}$ of potential resource owners, each of which we associate with a colour for use in our diagrams. Suppose for the remainder that $\alice$, $\bob$, and $\carol$ range over $\mathcal{C}$, and are associated with colours as follows: 

\[ \includegraphics[height=0.8cm]{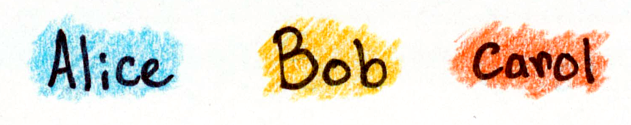} \]

Our goal will be to construct a new theory of resources in which resources and transformations are associated with (owned and carried out by) elements of $\mathcal{C}$. The objects of our new resource theory will be collections of owned objects of $\X$. That is, for each object $X$ of $\X$ and each $\alice \in \mathcal{C}$ we have an object $X^{\alice}$, which we interpret as an instance of resource $X$ owned by $\alice$, along with the empty collection $I$ and composite collections $X^{\alice} \otimes Y^{\bob}$, in which $\alice$'s instance of $X$ exists alongside an instance of $Y$ owned by $\bob$. 

Similarly, for each transformation $f : X \to Y$ in $\X$, we ask for transformations $f^{\alice} : X^{\alice} \to Y^{\alice}$ and $f^{\bob} : X^{\bob} \to Y^{\bob}$ for all $\alice,\bob \in \mathcal{C}$, whose presence we interpret as the ability of each owner to effect all possible transformations of resources they own. We draw these annotated transformations as, respectively:

\[ \includegraphics[height=2cm]{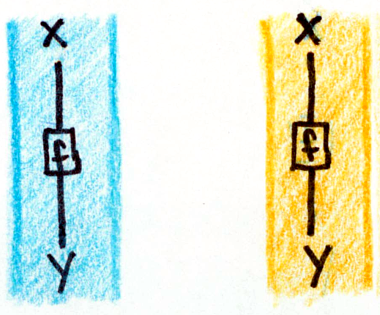} \]

Since we are building a theory of resources we must end up with a symmetric monoidal category, so we also assume the presence of the associated morphisms, such as $f^{\alice} \otimes g^{\bob}$ and $\sigma_{X^{\alice},Y^{\bob}}$. 




Next, we account for the formal difference between $X^{\alice} \otimes Y^{\alice}$ and $(X \otimes Y)^{\alice}$. In both situations $\alice$ owns an $X$ and a $Y$, but in the former they are formally grouped together, while in the latter they are formally separated. We understand this formal grouping of $\alice$'s assets by analogy with physical currency. The situation in which $\alice$'s assets are separated is like $\alice$ having two coins worth one euro, while the situation in which they are grouped together is like $\alice$ having one coin worth two euros. In both cases, $\alice$ posesses two euros, but the difference is important: $\alice$ cannot give $\bob$ half of the two euro coin, but can easily give $\bob$ one of the two one euro coins. This distinction is also present in cryptocurrency systems, where there is an operational difference between having funds spread across many addresses and having them collected at one address. Reflecting both the reality of such systems and the principle that one ought to be able to freely reconfigure the formal grouping of things that they own, we ask that for each $X,Y$ objects of $\X$ and each $\alice \in \mathcal{C}$ our new resource theory has morphisms $\phi_{X,Y} : X^{\alice} \otimes Y^{\alice} \to (X \otimes Y)^{\alice}$ and $\psi_{X,Y} : (X \otimes Y)^{\alice} \to X^{\alice} \otimes Y^{\alice}$. We draw these morphisms, respectively, as follows:

\[ \includegraphics[height=2cm]{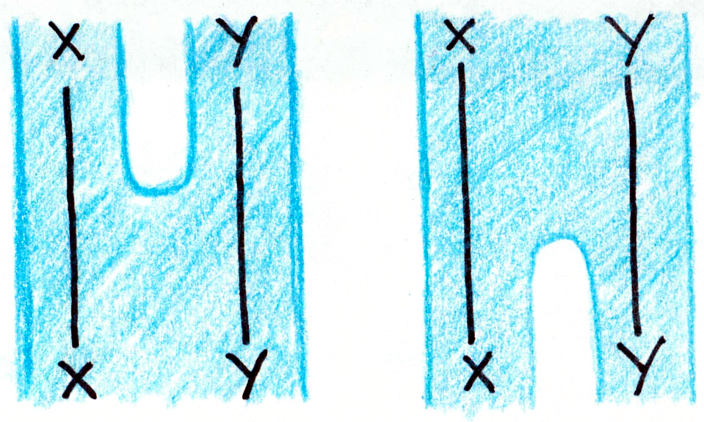} \]

These changes of formal grouping should not interact with the resource transformations of our original theory $\X$, since it ought not matter whether $\alice$ combines (splits) her resources before or after transforming them. That is, we we require:
\begin{enumerate}[]
\item \textbf{[G.1]} $\phi^{\alice}_{X,Y}(f \otimes g)^{\alice} = (f^{\alice} \otimes g^{\alice})\phi^{\alice}_{X',Y'}$
\item \textbf{[G.2]} $(f \otimes g)^{\alice}\psi^{\alice}_{X',Y'} = \psi^{\alice}_{X,Y}(f^{\alice} \otimes g^{\alice})$
\end{enumerate}
\[ \includegraphics[height=2cm]{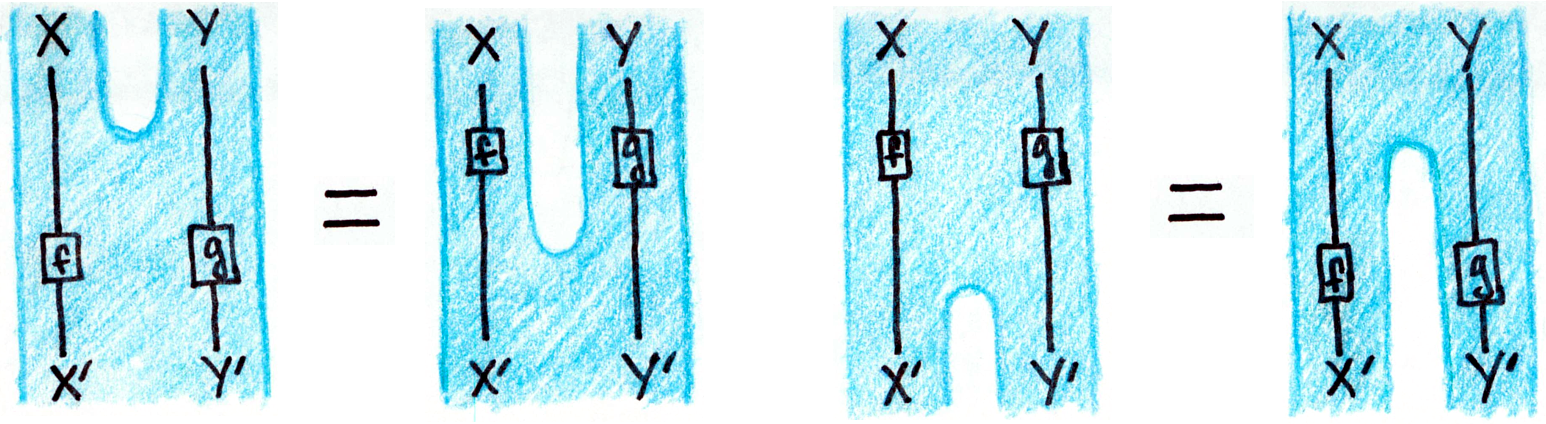}\]

As it stands, there are many non-equal ways for $\alice$ to reconfigure the formal grouping of their assets. Since these should all have the same effect, we need them all to be equal as morphisms in our resource theory. It suffices to ask that the $\phi^{\alice}$ and $\psi^{\alice}$ maps give, respectively, associative and coassociative operations, and that they are mutually inverse. That is (associativity and coassociativity):
\begin{enumerate}[]
\item \textbf{[G.3]} $(\phi^{\alice}_{X,Y} \otimes 1_Z^{\alice})\phi^{\alice}_{X\otimes Y,Z} = (1_X^{\alice} \otimes \phi^{\alice}_{Y,Z})\phi^{\alice}_{X,Y\otimes Z}$
\item \textbf{[G.4]} $\psi_{X\otimes Y,Z}^{\alice}(\psi_{X,Y}^{\alice} \otimes 1_Z^{\alice}) = \psi_{X,Y\otimes Z}^{\alice}(1_X^{\alice} \otimes \psi_{Y,Z}^{\alice})$
\end{enumerate}
\[ \includegraphics[height=2cm]{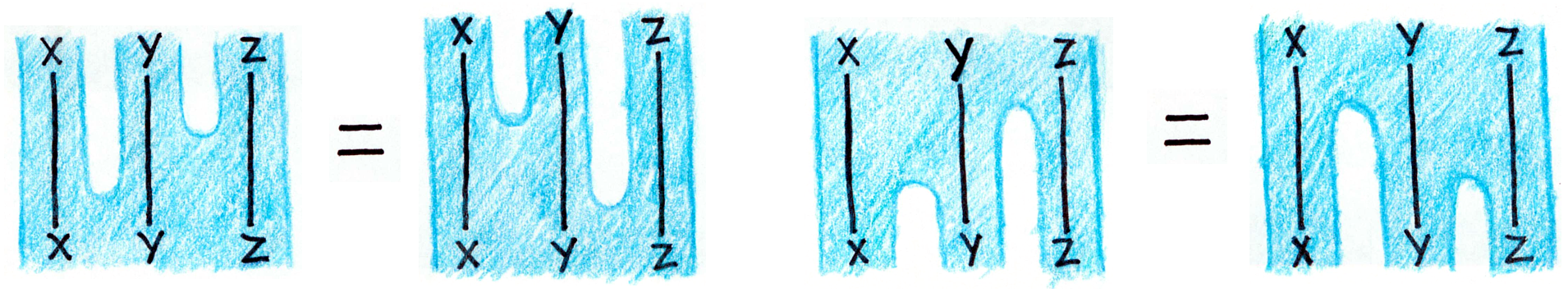} \]
and (mutually inverse): 
\begin{enumerate}[]
\item \textbf{[G.5]} $\psi^{\alice}_{X,Y}\phi^{\alice}_{X,Y} = 1_{X \otimes Y}^{\alice}$
\item \textbf{[G.6]} $\phi^{\alice}_{X,Y}\psi^{\alice}_{X,Y} = 1_X^{\alice} \otimes 1_Y^{\alice}$
\end{enumerate}
\[ \includegraphics[height=2cm]{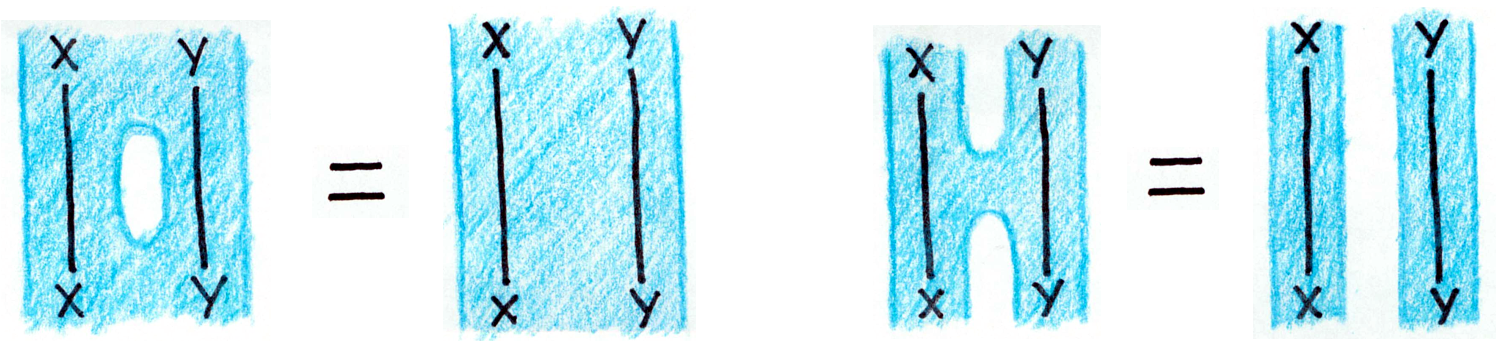} \]

To complete our treatment of these formal resource groupings, we must deal with the empty case $I^{\alice}$. We insist that $\alice$ may freely create and destroy such empty collections via morphisms $\phi^{\alice}_I : I \to I^{\alice}$ and $\psi^{\alice}_I : I^{\alice} \to I$:

\[ \includegraphics[height=2cm]{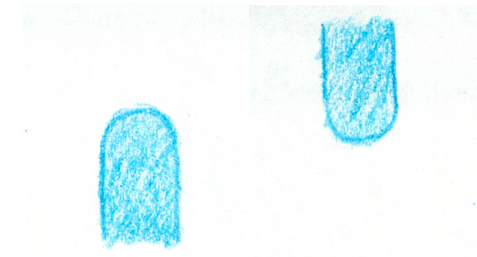} \]

\noindent subject to the following axioms, which state that adding or removing nothing from a group or resources has the same effect as doing nothing, and that $\phi_I$ and $\psi_I$ are mutually inverse, which together ensure that even with $\phi_I$ and $\psi_I$ in the mix, any two formal regroupings with the same domain and codomain are equal. 
\begin{enumerate}[]
\item \textbf{[G.7]} $(\phi_I^{\alice} \otimes 1_X^{\alice})\phi_{I,X}^{\alice} = 1_X^{\alice} = (1_X^{\alice} \otimes \phi_I^{\alice})\phi_{X,I}^{\alice}$
\item \textbf{[G.8]} $\psi_{I,X}^{\alice}(\psi_I^{\alice} \otimes 1_X^{\alice}) = 1_X^{\alice} = \psi_{X,I}^{\alice}(1_X^{\alice} \otimes \psi_I^{\alice})$
\item \textbf{[G.9]} $\phi_I^{\alice}\psi_I^{\alice} = 1_I$
\item \textbf{[G.10]} $\psi_I^{\alice}\phi_I^{\alice} = 1_I^{\alice}$
\end{enumerate}
\[ \includegraphics[height=4cm]{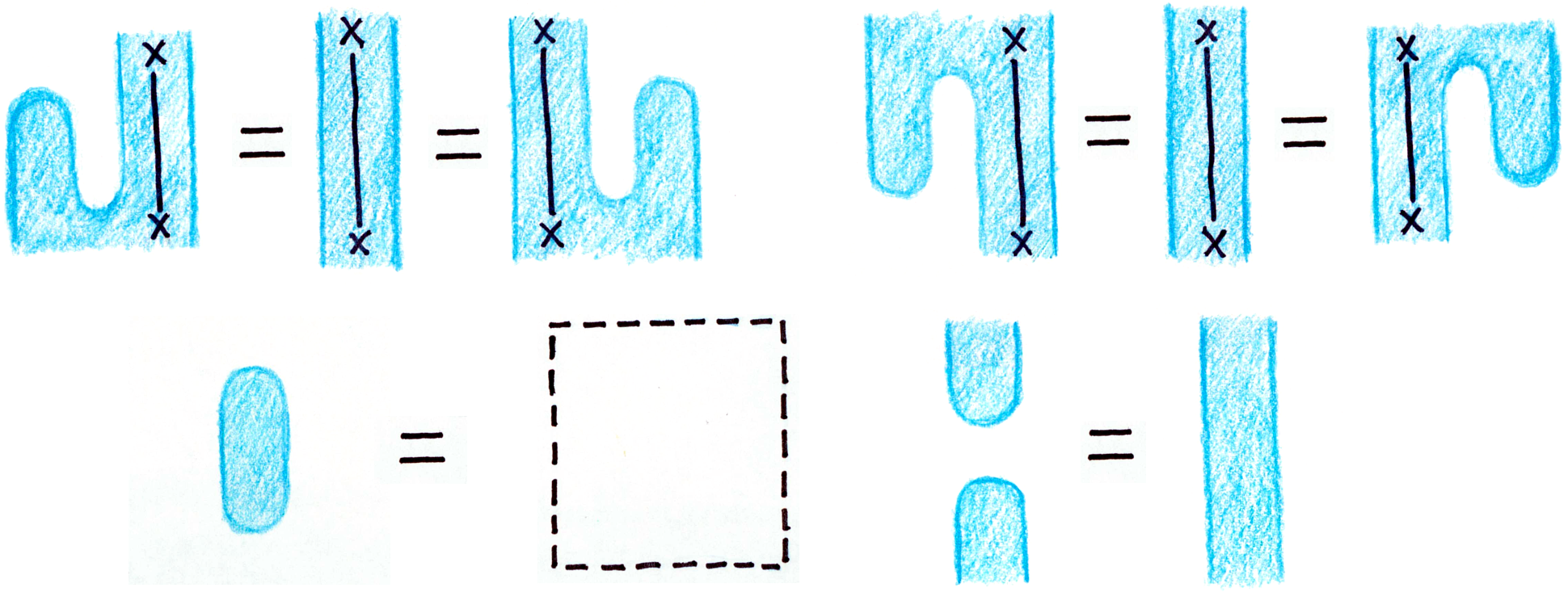} \]

Finally, we ask that $\phi$ and $\psi$ behave coherently with respect to the symmetry maps. It suffices to require that
\begin{enumerate}[]
\item \textbf{[G.11]} $\phi_{X,Y}^{\alice}\sigma_{X,Y}^{\alice} = \sigma_{X^{\alice},Y^{\alice}}\phi_{Y,X}^{\alice}$
\end{enumerate}
\[ \includegraphics[height=2.3cm]{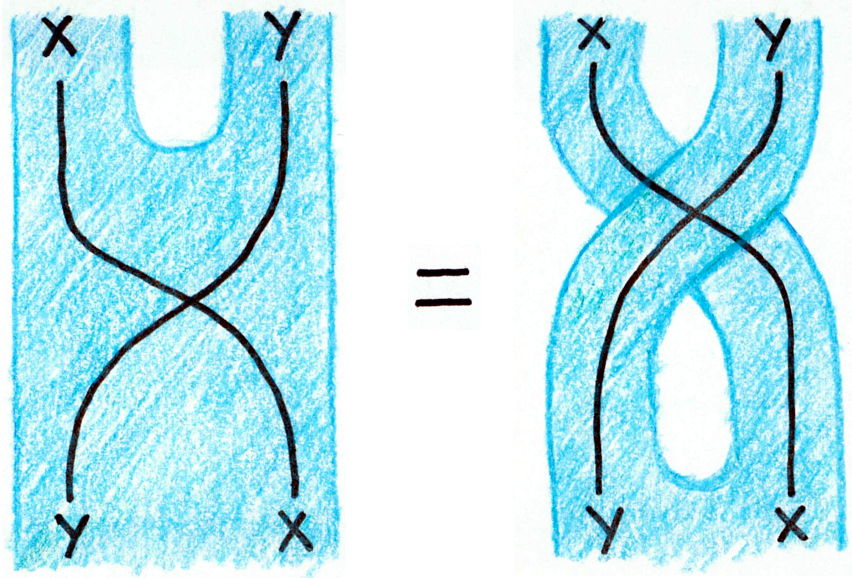} \]

\subsection{Change of Ownership}

Of course, ownership is not static over time. We require the ability the \emph{change} the owner of a given collection of resources. To this end we add morphisms $\gamma_X^{\alice,\bob} : X^{\alice} \to X^{\bob}$ to our new resource theory for each object $X$ of $\X$, each $\alice,\bob \in \mathcal{C}$. We depict these new morphisms in our string diagrams as follows:

\[ \includegraphics[height=2cm]{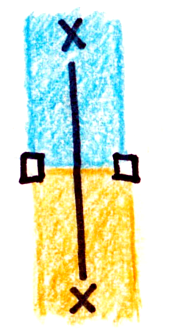} \]

As with regrouping, change of ownership should not interact with resource transformations, in the sense that:
\begin{enumerate}[]
\item \textbf{[O.1]} $f^{\alice}\gamma_Y^{\alice,\bob} = \gamma_X^{\alice,\bob}f^{\bob}$
\end{enumerate}
\[ \includegraphics[height=2cm]{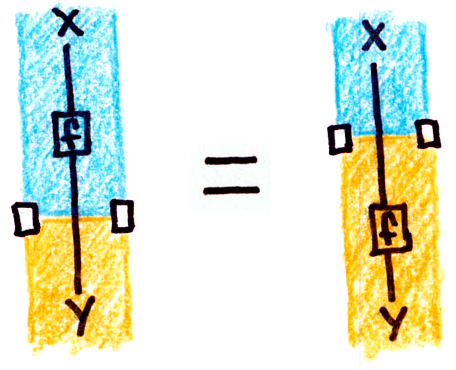} \]

Further, change of ownership must behave coherently with respect to the regrouping morphisms in the sense that:
\begin{enumerate}[]
\item \textbf{[O.2]} $\phi_{X,Y}^{\alice}\gamma_{X\otimes Y}^{\alice,\bob} = (\gamma_X^{\alice,\bob}\otimes \gamma_Y^{\alice,\bob})\phi_{X,Y}^{\bob}$
\item \textbf{[O.3]} $\gamma_{X \otimes Y}^{\alice,\bob}\psi^{\bob}_{X,Y} = \psi^{\alice}_{X,Y}(\gamma^{\alice,\bob}_X \otimes \gamma^{\alice,\bob}_Y)$
\end{enumerate}
\[ \includegraphics[height=2cm]{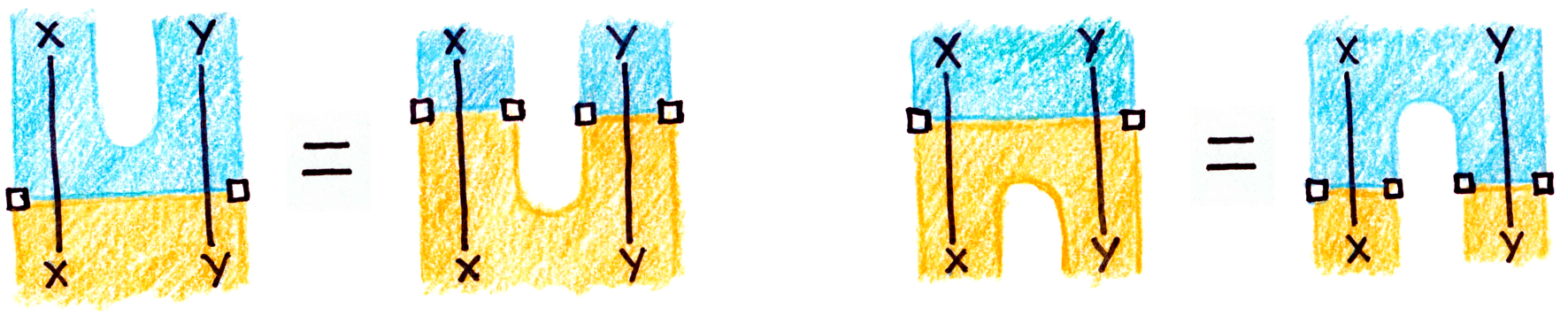} \]

For completeness, we axiomatize the interaction of change of ownership with empty collections by requiring that:
\begin{enumerate}[]
\item \textbf{[O.4]} $\phi_I^{\alice}\gamma_I^{\alice,\bob} = \phi_I^{\bob}$
\item \textbf{[O.5]} $\gamma_I^{\alice,\bob}\psi_I^{\bob} = \psi_I^{\alice}$
\end{enumerate}
\[ \includegraphics[height=2cm]{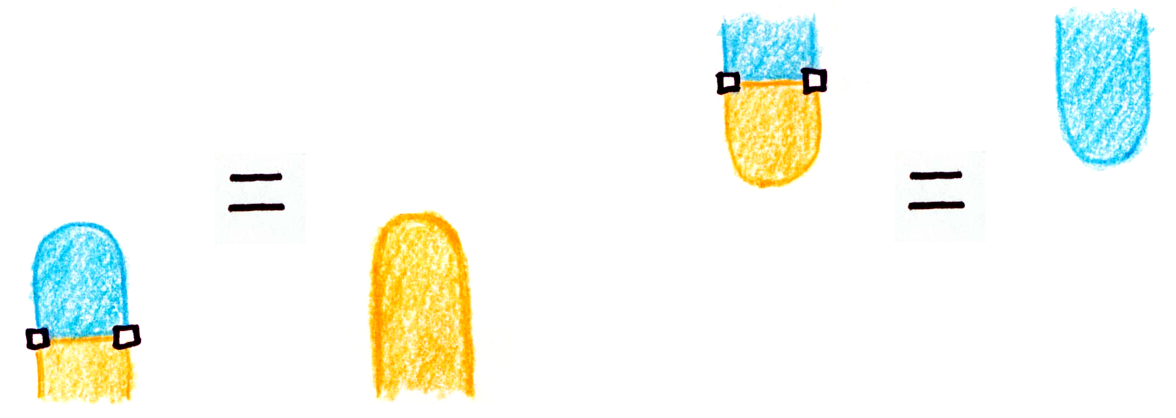} \]

Finally, we insist that if $\alice$ gives something to $\bob$, and $\bob$ then gives it to $\carol$, this has the same effect as $\alice$ giving the thing directly to $\carol$. Similarly, if $\alice$ gives something to $\alice$, we insist that this has no effect. 
\begin{enumerate}[]
\item \textbf{[O.6]} $\gamma^{\alice,\bob}_X\gamma^{\bob,\carol}_X = \gamma_X^{\alice,\carol}$
\item \textbf{[O.7]} $\gamma^{\alice,\alice}_X = 1_X^{\alice}$
\end{enumerate}

\[ \includegraphics[height=2cm]{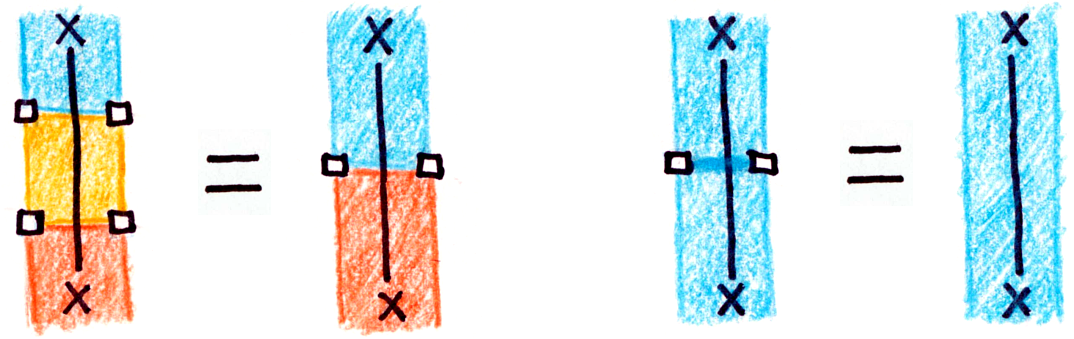} \]

We end up with a rather expressive diagrammatic language. For example, if we begin with the resource theory of bread, then our new resource theory is powerful enough to show:

\[ \includegraphics[height=8cm]{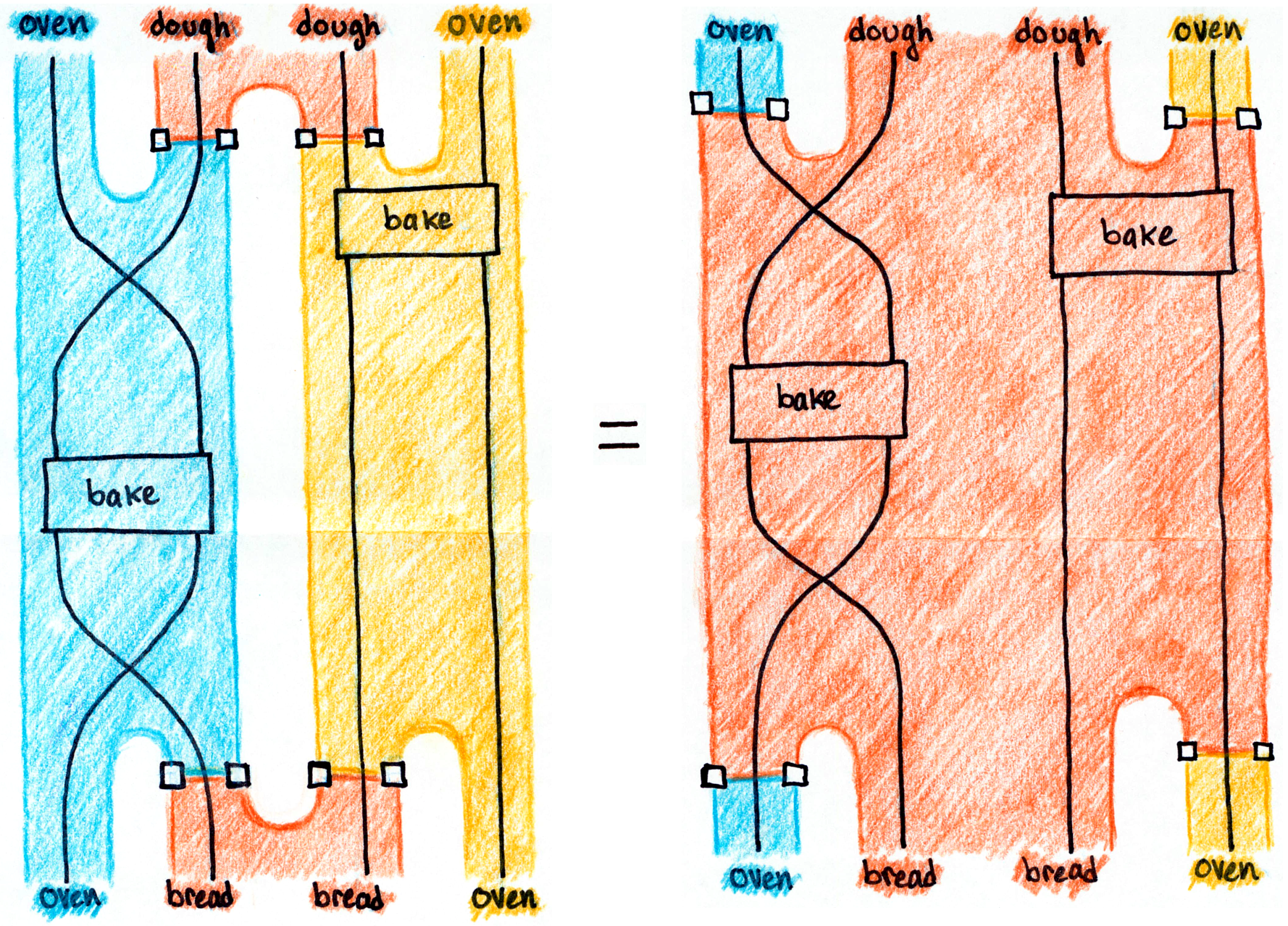} \]

which captures the fact that the sequence of events on the left in which $\carol$ gives $\alice$ and $\bob$ each a portion of dough to bake in their ovens, after which they give the resulting bread to $\carol$ \emph{has the same effect} as the sequence of events on the right in which $\alice$ and $\bob$ give their ovens to $\carol$, who bakes the portions of dough herself before returning the ovens to their original owners. Notice that our diagrammatic representation of this is \emph{much} easier to understand than the corresponding terms in linear syntax!

\section{Categorical Semantics}

In this section we show how our augmented string diagrams can be given precise mathematical meaning. Specifically, from a resource theory and a set whose elements we think of as entities capable of owning resources, we construct a new resource theory in which all resources are owned by some entity. We finish by showing how to model a simple cyrptocurrency ledger with our machinery. 

\subsection{Interpreting String Diagrams with Ownership}

If $\X$ is a theory of resources and $\mathcal{C}$ is our set, we treat $\mathcal{C}$ as the corresponding discrete category, writing $A : A \to A$ for the identity maps, and form the product category $\X \times \mathcal{C}$. Write objects and maps of this product category as $X^A = (X,A)$ and $f^A = (f,A)$ respectively. Now, define $\mathcal{C}(\X)$ to be the free strict symmetric monoidal category on $\X \times \mathcal{C}$ subject to the following additional axioms:
\begin{mathpar}
  \inferrule{A \in \mathcal{C} \\ X,Y \text{ objects of } \X}{\phi^A_{X,Y} : X^A \otimes Y^A\to (X \otimes Y)^A \text{ in } \mathcal{C}(\X)}

  \inferrule{A \in \mathcal{C}}{\phi_I^A : I \to I^A \text{ in } \mathcal{C}(\X)}

  \inferrule{A \in \mathcal{C} \\ X,Y \text{ objects of } \X}{\psi^A_{X,Y} : (X \otimes Y)^A \to X^A \otimes Y^A \text{ in } \mathcal{C}(\X)}

  \inferrule{A \in \mathcal{C}}{\psi_I^A : I^A \to I \text{ in } \mathcal{C}(\X)}

  \inferrule{A,B \in \mathcal{C} \\ X \text{ an object of } \X}{\gamma_X^{A,B} : X^A \to X^B \text{ in } \mathcal{C}(\X)}
\end{mathpar}
and subject to equations \textbf{[G.1--11]} and \textbf{[O.1--7]} for $\alice,\bob,\carol \in \mathcal{C}$, $X,Y,Z$ objects of $\X$, and $f,g$ morphisms of $\X$. 

Clearly, $\mathcal{C}(\X)$ is the new resource theory our coloured string diagrams live in. We think of objects $X^A$ and morphisms $f^A$ as being owned and carried out, respectively, by $A \in \mathcal{C}$. The free monoidal structue gives us the ability to compose such transformations sequentially and in parallel, and the additional axioms ensure our ownership interpretation of $\mathcal{C}(\X)$ is reasonable.

We can characterize the category-theoretic effect of axioms \textbf{[G.1--11]} and \textbf{[O.1--5]} as follows:
  
\begin{proposition}
  For any symmetric monoidal category $\X$ and any set $\mathcal{C}$, there is a strong symmetric monoidal functor
  \[ A : \X \to \mathcal{C}(\X) \]
  for each $A \in \mathcal{C}$. Further, there is a monoidal and comonoidal natural transformation
  \[\gamma^{A,B} : A \to B\]
  between the functors corresponding to any two $A,B \in \mathcal{C}$.
\end{proposition}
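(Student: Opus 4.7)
The plan is to define the functor $A$ by exploiting the inclusion of $\X$ as the $A$-slice of $\X \times \mathcal{C}$, and then to observe that the axioms \textbf{[G.1--11]} and \textbf{[O.1--5]} imposed on $\mathcal{C}(\X)$ are, almost verbatim, the coherence conditions required by the statement. No single step is a real obstacle; the work consists of matching axioms to coherence diagrams.

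\emph{The functor $A$.} I would set $A(X) := X^A$ on objects and $A(f) := f^A$ on morphisms, using that $\X \times \mathcal{C}$ embeds into $\mathcal{C}(\X)$ by the free construction. Functoriality is inherited from the product category, where identities and composites are computed componentwise. To promote $A$ to a strong symmetric monoidal functor, take the $\phi$'s as structure morphisms: $\phi^{A}_{X,Y} : A(X) \otimes A(Y) \to A(X \otimes Y)$ and $\phi^{A}_{I} : I \to A(I)$. Axioms \textbf{[G.5, G.6, G.9, G.10]} provide two-sided inverses $\psi^{A}_{X,Y}$ and $\psi^{A}_{I}$, so these really are isomorphisms. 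The remaining coherences match as follows: naturality of $\phi^{A}_{X,Y}$ is \textbf{[G.1]}; the associativity hexagon is \textbf{[G.3]}; the left and right unit triangles are \textbf{[G.7]}; and the symmetry hexagon is \textbf{[G.11]}. (Axioms \textbf{[G.2, G.4, G.8]} are the corresponding oplax conditions on $\psi$; in the strong setting they are forced by inversion, but they confirm directly that $\psi$ supplies the opposite oplax structure.)

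\emph{The transformation $\gamma^{A,B}$.} Define the component at $X$ to be $\gamma_X^{A,B} : A(X) \to B(X)$. Naturality in $X$ is exactly \textbf{[O.1]}. That $\gamma^{A,B}$ is monoidal with respect to $\phi$ is \textbf{[O.2]} at the binary level and \textbf{[O.4]} at the unit. Dually, \textbf{[O.3]} and \textbf{[O.5]} say that $\gamma^{A,B}$ commutes with the inverse structure $\psi^{A}, \psi^{A}_{I}$, so $\gamma^{A,B}$ is comonoidal as well.

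\emph{Where the real bookkeeping lives.} The one thing that requires care, rather than mere pattern-matching, is invoking the universal property of $\mathcal{C}(\X)$: to be sure that the assignment on generators extends to a well-defined functor out of the quotient, I would check that the image of each defining equation of $\mathcal{C}(\X)$ still holds in $\mathcal{C}(\X)$, which is trivially the case since those equations are imposed there. The axioms \textbf{[O.6]} and \textbf{[O.7]} play no role here; they will only be needed later to package the family $\{\gamma^{A,B}\}$ into a functor from the chaotic groupoid on $\mathcal{C}$ into the category of strong monoidal endomorphisms, which is presumably what powers the equivalence claim advertised in the introduction.
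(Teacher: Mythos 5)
Your proposal is correct and follows essentially the same route as the paper: define $A$ on objects and morphisms via the embedding of $\X \times \mathcal{C}$, inherit functoriality componentwise, take the $\phi^A$/$\psi^A$ maps with \textbf{[G.1--11]} as the strong symmetric monoidal structure, and give $\gamma^{A,B}$ the components $\gamma^{A,B}_X$ with \textbf{[O.1--5]} supplying naturality and (co)monoidality. Your axiom-by-axiom matching is in fact more explicit than the paper's one-line appeal to the axioms, and your closing observation that \textbf{[O.6--7]} are not used here is exactly the remark the paper makes immediately after its proof.
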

\begin{proof}
Define $A : \X \to \mathcal{C}(\X)$ by $A(X) = (X,A)$ on objects, and $A(f) = (f,A)$ on maps. For identity maps, $A(1_X) = (1_X,A) = 1_{(X,A)} = 1_{A(X)}$ since $(1_X,A)$ is the identity on $(X,A)$ in $\X \times \mathcal{C}$. For composition, $A(fg) = (fg,A) = (f,A)(g,A) = A(f)A(g)$.Thus $A$ defines a functor. $A$ is strong symmetric monoidal via the $\phi^A$ and $\psi^A$ maps together with \textbf{[G.1]} through \textbf{[G.11]}. Consider $A,B : \X \to \mathcal{C}(\X)$ corresponding to $A,B \in \mathcal{C}$. Define $\gamma^{A,B} : A \to B$ to have components $\gamma^{A,B}_X$. Then $\gamma^{A,B}$ is a monoidal and comonoidal via \textbf{[O.1]} through \textbf{[O.5]}. \qed
\end{proof}      

Notice that we did not use \textbf{[O.6--7]} above. These axioms are motivated by our desire to model resource ownership, but they have an important, if subtle, effect on the theory: they allow us to show that $\X$ and $\mathcal{C}(\X)$ are equivalent as categories. This means that any suitably categorical structure is present in $\X$ if and only if it is present in $\mathcal{C}(\X)$ as well. For example, products in $\X$ manifest as products in $\mathcal{C}(\X)$, morphisms that are monic in $\X$ remain monic in $\mathcal{C}(\X)$, and so on. We may be confident that our addition of ownership information has not broken any of the structure of $\X$, or added anything superfluous!

\begin{proposition}
There is an adjoint equivalence between $\X$ and $\mathcal{C}(\X)$ for each functor corresponding to some $A \in \mathcal{C}$.
\end{proposition}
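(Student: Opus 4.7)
The strategy is to construct an inverse functor $U : \mathcal{C}(\X) \to \X$ that forgets ownership information, and to exploit the fact that axioms \textbf{[O.6--7]} make each $\gamma^{A,B}_X$ a natural isomorphism: $\gamma^{A,B}_X \gamma^{B,A}_X = \gamma^{A,A}_X = 1^A_X$ and likewise in the other direction. Combined with the invertibility of the strong monoidal coherence maps $\phi^A, \psi^A$, this gives the data needed for the counit.

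Define $U : \mathcal{C}(\X) \to \X$ on objects by $U(X^B) = X$, extended freely via $U(W \otimes Z) = U(W) \otimes U(Z)$ and $U(I) = I$; and on generating morphisms by $U(f^B) = f$, with each of $\phi^B_{X,Y}$, $\psi^B_{X,Y}$, $\phi^B_I$, $\psi^B_I$, and $\gamma^{B,C}_X$ sent to the appropriate identity of $\X$. Well-definedness reduces to a check that each of \textbf{[G.1--11]} and \textbf{[O.1--7]} survives under $U$; every one collapses either to a tautology or to an equation already valid in $\X$. By construction $UA = 1_\X$ on the nose, so we take the unit $\eta = \mathrm{id}_{1_\X}$.

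For the counit $\epsilon : AU \Rightarrow 1_{\mathcal{C}(\X)}$, an arbitrary object of $\mathcal{C}(\X)$ has the form $W = X_1^{B_1} \otimes \cdots \otimes X_n^{B_n}$ (with $n = 0$ giving $I$), and $AU(W) = (X_1 \otimes \cdots \otimes X_n)^A$. Define $\epsilon_W$ by first splitting via iterated $\psi^A$ (using $\psi^A_I$ for the empty case) into $X_1^A \otimes \cdots \otimes X_n^A$, and then applying $\gamma^{A,B_i}_{X_i}$ componentwise. Each $\psi^A$ is invertible by \textbf{[G.5--6]} and \textbf{[G.9--10]} and each $\gamma$ by \textbf{[O.6--7]}, so $\epsilon_W$ is an isomorphism. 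Naturality is checked generator by generator: $h = f^B$ is exactly \textbf{[O.1]}, $h = \gamma^{B,C}_X$ is \textbf{[O.6]}, and the regrouping generators $\phi^B_{X,Y}$, $\psi^B_{X,Y}$, $\phi^B_I$, $\psi^B_I$ reduce to \textbf{[O.2--5]}; the extension to composites and tensor products is the usual induction on morphisms in a free monoidal category.

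The triangle identities then fall out cleanly. Since $\eta = \mathrm{id}$, the first reduces to $\epsilon_{X^A} = 1_{X^A}$, which holds because $\epsilon_{X^A} = \gamma^{A,A}_X = 1^A_X$ by \textbf{[O.7]}. The second reduces to $U(\epsilon_W) = 1_{U(W)}$, which holds because $U$ annihilates every $\gamma$ and every $\psi^A$ appearing in $\epsilon_W$. The main obstacle is the bookkeeping inside the naturality check at the regrouping generators, where one has to interleave \textbf{[O.2--3]} with the $\phi^A$--$\psi^A$ cancellations of \textbf{[G.5--6]} to slide the coherence isomorphisms of $A$ past the componentwise $\gamma$'s; this is mechanical but requires care once $n \geq 2$.
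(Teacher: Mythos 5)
Your proof is correct, but it takes a genuinely different route from the paper. The paper establishes the equivalence indirectly: it shows each $A : \X \to \mathcal{C}(\X)$ is essentially surjective (by structural induction on objects, using $\phi^A$ to absorb tensor products) and fully faithful (by a normalization argument on terms: any $h : A(X) \to A(Y)$ is rewritten so that all $\gamma$'s, and then all $\phi$'s and $\psi$'s, are collected at the front, where type considerations together with \textbf{[O.6--7]} and the invertibility axioms force them to cancel, leaving $h = A(f)$ with $f = U(h)$). You instead exhibit the adjoint equivalence directly, with $\eta = \mathrm{id}$ from $UA = 1_\X$ and an explicit counit built from iterated $\psi^A$'s and componentwise $\gamma$'s, then verify naturality generator-by-generator and check the triangle identities. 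Your version has the advantage of producing the adjoint equivalence data concretely and of replacing the paper's somewhat delicate (and only sketched) rewriting claim with a naturality check whose cases are individually routine; the cost is that the combinatorial content reappears in that check, and you should also include the symmetry generators $\sigma_{X^B,Y^C}$ among the cases (this is where \textbf{[G.11]} is needed) and note that coassociativity \textbf{[G.4]} makes ``iterated $\psi^A$'' independent of the bracketing, so that $\epsilon_W$ is well defined. With those two small additions your argument is complete and, if anything, more self-contained than the paper's.
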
  
\begin{proof}
   
  We show that each $A : \X \to \mathcal{C}(\X)$ is fully faithful, and essentially surjective, beginning with the latter. To that end, suppose that $P$ is an object of $\mathcal{C}(\X)$. We proceed by structural induction: If $P$ is $I$, then $\phi_0$ witnesses $I \simeq A(I)$. If $P$ is an atom $(X,A)$, then $(X,A) = A(X)$. If $P$ is $Q \otimes R$ for some $Q,R$, then by induction we have that $Q \simeq A(X_1)$ and $R \simeq A(X_2)$ for some objects $X_1,X_2$ of $\X$. We may now form

\centerline{\xymatrix{
      Q \otimes R \simeq A(X_1) \otimes A(X_2) \ar[r]^(0.66){\phi^A_{X_1,X_2}} & A(X_1 \otimes X_2)
  }}

\noindent which witnesses $P \simeq A(X_1 \otimes X_2)$. Thus, $A$ is essentially surjective. To see that $A$ is fully faithful, let $U : \mathcal{C}(\X) \to \X$ be the obvious forgetful functor. The required bijection $\X(X,Y) \simeq \mathcal{C}(\X)(A(X),A(Y))$ is given by $A$ in one direction and $U$ in the other. It sufffices to show that any morphism $h : A(X) \to A(Y)$ with $U(h) = f$ is such that $h = A(f)$. Notice that since each $\gamma^{A,B}$ is a monoidal and comonoidal natural transformation, there is a term equal to $h$ in which all $\gamma$ morphisms occur before all other morphisms (in the sense that $f$ occurs before $g$ in $fg$). Since $h : A(X) \to A(Y)$ we know that in this equal term the composite of the $\gamma$ must have type $A(X) \to A(X)$, and must therefore be the identity by repeated application of \textbf{[O.6]} and \textbf{[O.7]}. This gives a term $h'$ containing no $\gamma$ maps with $h' = h$. Similarly, since the various $\phi$ and $\psi$ morphisms are natural transformations, we may construct a term $h''$ by collecting all instances of $\phi$ and $\psi$ terms at the beginning of $h'$. Once collected there, the composite of all the $\phi$ and $\psi$ must have type $A(X) \to A(X)$, and is therefore equal to the identity. At this point we know that $h'' : A(X) \to A(Y)$ is such that $h'' = A(f_1)\cdots A(f_n)$ for some $f_1,\ldots,f_n$ in $\X$. By assumption $f = U(h) = U(h'') = f_1\cdots f_n$, and therefore $h'' = A(f)$. \qed
\end{proof}

\subsection{A Simple Example}


In this section we attempt to demonstrate the relevance of the above techniques to the cryptocurrency world by building a resource theory that models a simple ledger structure along the lines of Bitcoin \cite{Nak08}. Let $\mathds{1}$ be the trivial category, with one object, $1$, and one morphism, the identity $1_1$. Define $\mathbb{N}$ to be the free symmetric strict monoidal category on $\mathds{1}$, write $0$ for the monoidal unit of $\mathbb{N}$, and $n$ for the $n$-fold tensor product of $1$ with itself for all natural numbers $n \geq 1$. Notice that $n + m$ is $n \otimes m$. We will think of the objects $n$ of $\mathbb{N}$ where $n \geq 1$ as \emph{coins}. Of course, $0 = I$ represents the situation in which no coin in present. 

Define $\mathbb{N}_\nu$ to be the result of formally adding a morphism $\nu : 0 \to 1$ to $\mathbb{N}$, write $\nu_0 = 1_0 : 0 \to 0$, and $\nu_n : 0 \to n$ for the $n$-fold tensor product of $\nu$ with itself for $n \geq 1$. These morphisms confer the ability to create new coins, so we imagine their use would be restricted in practice. We will not ask for the ability to destroy coins, although there would be no theoretical obstacle to doing so. 

Now, let $\mathcal{C}$ be a collection of colours, which we can think of as standing in for cryptographic key pairs, or simply entities capable of owning coins. Consider $\mathcal{C}(\mathbb{N}_\nu)$. Objects are lists $n_1^{c_1} \otimes \cdots \otimes n_k^{c_k}$, which we interpret as lists of coins, where $n_i^{c_i}$ is a coin of value $n_i$ belonging to $c_i \in \mathcal{C}$. The morphisms are either $\nu_n^c$ for some $c \in \mathcal{C}$, the structural morphisms of a monoidal category, or the $\phi,\psi$, and $\gamma$ morphisms added by our construction. For $n,m \in \mathbb{N}$ and $\alice,\bob \in \mathcal{C}$, the maps $\phi_{n,m}^{\alice} : n^{\alice} \otimes m^{\alice} \to (n + m)^{\alice}$ and $\psi^{\alice}_{n,m} : (n + m)^{\alice} \to n^{\alice} \otimes m^{\alice}$ allow users to combine and split their coins in a value-preserving manner, and the $\gamma^{\alice,\bob}_n$ maps allow them to exchange coins.

Now, a ledger is a (syntactic) morphism $a : I \to A$ of $\mathcal{C}(\mathbb{N}_\nu)$. A transaction to be included in $a$ consists of a transformation $f : X \to Y$ of $\mathcal{C}(\mathbb{N}_\nu)$ along with information about which outputs of $a$ are to be the inputs of the transformation, which we package as $t = \pi_t(1 \otimes f \otimes 1) : A \to B$. The result of including transaction $t$ in ledger $a$ is then the composite ledger $t \circ a : I \to B$. Put another way, a ledger is given by a list of transformations in $\mathcal{C}(\mathbb{N}_\nu)$:
\[I \stackrel{t_1}{\longrightarrow} A_1 \stackrel{t_2}{\longrightarrow} \cdots \stackrel{t_k}{\longrightarrow} A_k \]

For the purpose of illustration, we differentiate between $m + n$ and $m \otimes n$ in our string diagrams for $\mathbb{N}_\nu$. We do so by means of the string diagrams for (not necessarily strict) monoidal categories (see e.g. \cite{Coc17}), as in:

\[\includegraphics[height=2cm]{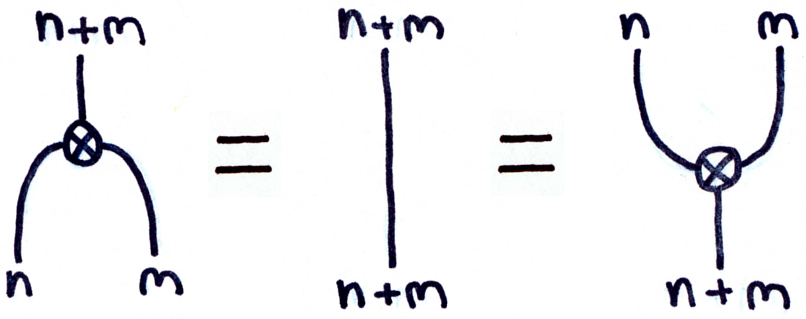}\]

\noindent Now, suppose we have a ledger $a : I \to \nu^{\carol}_7 \otimes \nu^{\alice}_5$:

\[\includegraphics[height=1.7cm]{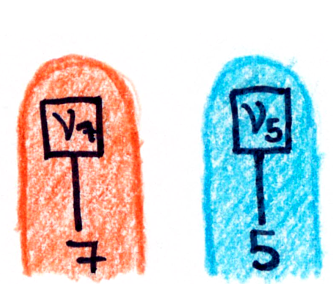}\]

\noindent and resource transformations $f_1,f_2,f_3$ defined, respectively, by:


\[\includegraphics[height=3.5cm]{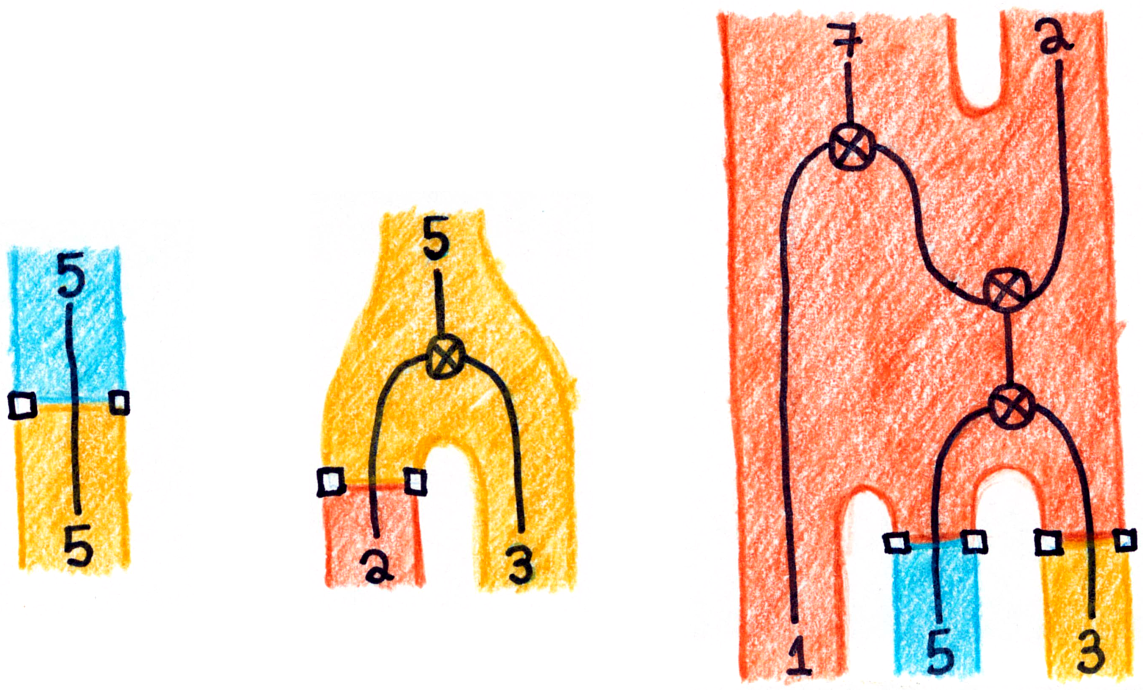}\]

\noindent Now, form transaction $t_1 = (1_7^{\carol} \otimes f_1)$ and append it to $a$ to obtain $t_1 \circ a$
\[\includegraphics[height=2.5cm]{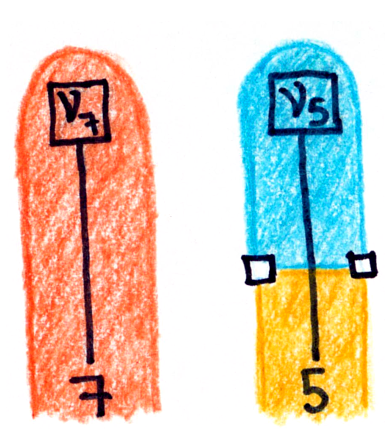}\]

\noindent Next, form transaction $t_2 = (1_7^{\carol} \otimes f_2)$ and append it to obtain $t_2 \circ t_1 \circ a$
\[\includegraphics[height=3.5cm]{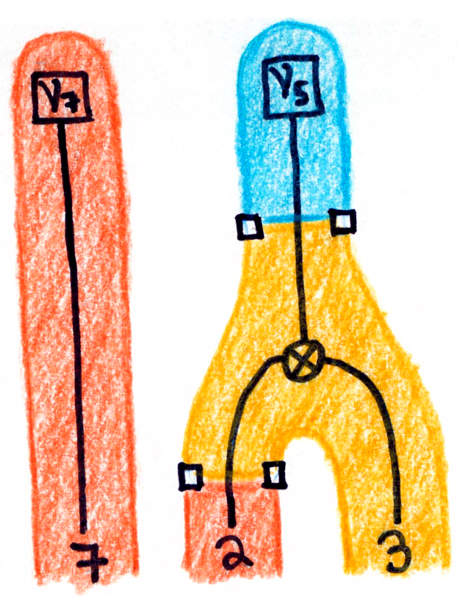}\]

\noindent Finally, form transaction $t_3 = (f_3 \otimes 1_3^{\bob})$ and append it to obtain $t_3 \circ t_2 \circ t_1 \circ a$
\[\includegraphics[height=6cm]{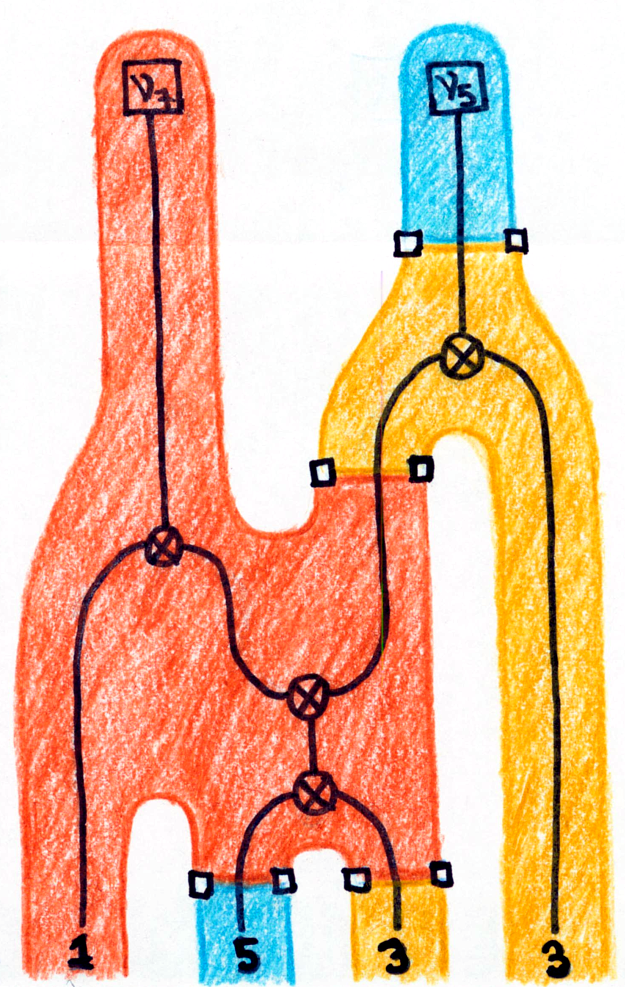} \]

In this manner, we capture the evolution of the ledger over time. Of course, we can also reason about whether two sequences of transactions result in the same ledger state by comparing the corresponding morphisms for equality, although in the case of $\mathcal{C}(\mathbb{N}_\nu)$ there isn't much point, since all morphisms $A \to B$ are necessarily equal. 

\section{Conclusions and Future Work}

We have seen how the resource theoretic interpretation of monoidal categories, and in particular their string diagrams, captures the sort of material history that concerns ledger structures for blockchain systems. Additionally, we have shown how to freely add a notion of ownership to such a resource theory, and that the resulting category is equivalent to the original one. We have also shown that these resource theories with ownership admit an intuitive graphical calculus, which is more or less that of monoidal functors and natural transformations. Finally, we have used our machinery to construct a simple ledger structure and show how it might be used in practice. 

While we do not claim to have solved the problem of providing a rigorous foundation for the development of ledger structures in its entirety, we feel that our approach shows promise. There are a few differnt directions for future research. One is the development of categorical models for more sophisticated ledger structures, with the eventual goal being to give a rigorous formal account of smart contracts. Another is to explore the connections of the current work with formal treatments of accounting, such as \cite{Kat98}.

\newpage

\bibliographystyle{plain}
\bibliography{citations}

\begin{thebibliography}{10}

\bibitem{Atz18}
N.~Atzei, M.~Bartoletti, T.~Cimoli, S.~Lande, and R.~Zunino.
\newblock Unravelling bitcoin smart contracts.
\newblock In {\em POST 2018}, volume 10804 of {\em LNCS}, pages 217--242, 2018.

\bibitem{Atz17}
N.~Atzei, M.~Bartolietti, and T.~Cimoli.
\newblock A survey of attacks on ethereum smart contracts.
\newblock In {\em POST 2017}, volume 10204 of {\em LNCS}, pages 164--186, 2017.

\bibitem{Coc17}
J.R.B. Cockett and R.A.G. Seely.
\newblock Proof theory of the cut rule.
\newblock In E.~Landry, editor, {\em Categories for the Working Philosopher},
  pages 223--261. Oxford University Press, 2017.

\bibitem{Coe14}
B.~Coecke, T.~Fritz, and R.W. Spekkens.
\newblock A mathematical theory of resources.
\newblock {\em Information and Computation}, 250:59--86, 2016.

\bibitem{Jab18}
K.~Jabbar and P.~Bjorn.
\newblock Infrastructural grind: Introducing blockchain technology in the
  shipping domain.
\newblock In {\em GROUP 2018}, 2018.

\bibitem{Kat98}
P.~Katis, N.~Sabadini, and R.F.C. Walters.
\newblock On partita doppia.
\newblock 1998.

\bibitem{Kia17}
A.~Kiayias, A.~Russell, B.~David, and R.~Oliynykov.
\newblock Ouroboros: A provably secure proof-of-stake blockchain protocol.
\newblock {\em CRYPTO 2017, Part I, volume 10401 of LNCS}, 2017.

\bibitem{Mac71}
S.~Mac Lane.
\newblock {\em Categories for the Working Mathematician}.
\newblock Springer, 1971.

\bibitem{McC11}
M.B. McCurdy.
\newblock Graphical methods for tannaka duality of weak bialgebras and weak
  hopf algebras.
\newblock {\em Theory and Applications of Categories}, 26:233--280, 2011.

\bibitem{Nak08}
S.~Nakamoto.
\newblock Bitcoin: A peer-to-peer electronic cash system.
\newblock 2008.

\bibitem{Sel10}
Peter Selinger.
\newblock A survey of graphical languages for monoidal categories.
\newblock In {\em New Structures for Physics}, pages 289--355. Springer, 2010.

\bibitem{Sta17}
M.~Staples, S.~Chen, S.~Falamaki, A.~Ponomarev, P.~Rimba, A.B. Tran, I.~Weber,
  X.~Xu, and J.~Zhu.
\newblock {\em Risks and Opportunities for Systems Using Blockchain and Smart
  Contracts}.
\newblock Data61 (CSIRO), Sydney, 2017.

\bibitem{Woo14}
Gavin Wood.
\newblock Ethereum: A secure decentralized generalised transaction ledger.
\newblock 2014.

\end{thebibliography}
\end{document}